\documentclass[pra,floatfix,amsmath,superscriptaddress,onecolumn,nofootinbib]{revtex4-2}
\usepackage{amssymb}
\usepackage{graphicx}
\usepackage{graphics}
\usepackage{amsmath}
\usepackage{amsthm}
\usepackage{color}

\newcommand{\bea}{\begin{eqnarray}}
\newcommand{\eea}{\end{eqnarray}}

\def\bi{\begin{itemize}}
\def\ei{\end{itemize}}
\def\bc{\begin{center}}
\def\ec{\end{center}}

\def\C{\hbox{$\mit I$\kern-.7em$\mit C$}}
\def\R{\hbox{$\mit I$\kern-.6em$\mit R$}}

\newcommand{\one}{\mbox{$1 \hspace{-1.0mm}  {\bf l}$}}
\def\tr{\mathrm{tr}}

\newtheorem{theorem}{Theorem}

\newtheorem{lemma}[theorem]{Lemma}

\newtheorem{definition}[theorem]{Definition}

\begin{document}

\title{Quantum states with the same entanglement and local unitary equivalence}
\author{Julio I. de Vicente}\email{jdvicent@math.uc3m.es}
\affiliation{Departamento de Matem\'aticas, Universidad Carlos III de
Madrid, E-28911, Legan\'es (Madrid), Spain}
\affiliation{Instituto de Ciencias Matem\'aticas (ICMAT), E-28049 Madrid, Spain}

\begin{abstract}
Local operations assisted by classical communication (LOCC) play a fundamental role in the resource-theoretic formulation of entanglement theory inducing an operationally meaningful ordering in the set of entangled states. The particular class within LOCC of local unitary (LU) transformations is obviously closed under inversion. Thus, LU-equivalent states are interconvertible by LOCC and have therefore the same entanglement. For this reason LU-equivalence has been thoroughly studied in the context of entanglement theory. However, as I note here with a simple example, the converse is not true: there exist entangled states that are interconvertible by LOCC but which are not LU-equivalent. This motivates studying under which conditions LOCC interconvertibility is characterized by LU-equivalence. I provide a sufficient condition for this to hold. When one considers multipartite states with full entanglement dimensionality (as quantified by the Schmidt number vector) LOCC interconvertibility can only hold under LU-equivalence.  
\end{abstract}

\maketitle

\section{Introduction}

Entanglement constitutes a major field of study in quantum science both for its conceptual implications and for underlying different forms of quantum technologies. In the context of quantum information science entanglement is regarded as a resource. Thus, entanglement theory, which is formulated under the umbrella of the so-called quantum resource theories \cite{resource}, aims at characterizing, classifying and quantifying it with respect to its applicability in information-theoretic terms \cite{review1,review2,review3}. A fundamental notion in this theory is the paradigm of local operations and classical communication (LOCC). This makes it possible to study all protocols with which this resource can be manipulated, but, more importantly, it provides an operational ordering in the set of entangled states, which serves as a basic axiomatic ingredient to construct entanglement measures. Since entanglement cannot be created by LOCC, if such a transformation from a state $\rho$ to a state $\sigma$ is possible then $\rho$ cannot be less entangled than $\sigma$, which reflects the fact that $\rho$ cannot be less useful than $\sigma$ for any task to be completed by LOCC manipulation. Consequently, every entanglement measure cannot assign a lower value to $\rho$ than that of $\sigma$.

Given the above, states that are interconvertible by LOCC have exactly the same entanglement properties and entanglement theory puts them in the same equivalence class. Note in particular that all entanglement measures are then invariant on all such states. Thus, one of the most basic questions in this theory is to characterize which states are LOCC-interconvertible since this underlies any further analysis to be performed in this context. One of the simplest LOCC protocols is a local unitary (LU) transformation. Moreover, these transformations are reversible and their inverse is also a LU transformation. Therefore, LU-equivalent states are LOCC-interconvertible and belong to the same entanglement equivalence class. For this reason, this property has been extensively studied in the literature. This includes both establishing standard forms for states up to LU-equivalence and finding means to decide whether two given states are LU-equivalent or not. The well-known Schmidt decomposition provides a simple standard form for all bipartite pure states that are LU-equivalent and several works consider standard forms for pure states beyond the bipartite case \cite{LindenPopescu98,LindenPopescu99,carteret00,Acin00,Acin01}. Regarding the second line of research, the problem of deciding when two pure $n$-qubit are LU-equivalent was solved in \cite{kraus1,kraus2} and for pure $n$-qudit states with $d>2$ in \cite{Liu12}. An alternative line of attack to this problem, which allows one to treat as well the case of mixed states, is given by constructing polynomial invariants, for which it is known that a finite number of them is sufficient \cite{Rains00}. However, a complete list is only known in a few simple cases. This approach is followed by \cite{Grassl97,LindenPopescu99bis} and by \cite{Makhlin02}, which provides a minimal complete list of 18 invariants for the case of 2-qubit mixed states. More recent works tackle this problem in general with the so-called Bargmann invariants \cite{bargmann1,bargmann2,bargmann3} or construct invariants using the generalized Bloch representation \cite{bloch}.

While it is immediate that LU-equivalent states are LOCC-interconvertible, surprisingly little attention seems to have been devoted to the converse. In fact, to my knowledge this question has only been addressed in \cite{Gingrich}, where it is proven (see Theorem 1 therein) that two pure states are LOCC-interconvertible if and only if (iff) they are LU-equivalent. Yet, even though the identification of these two notions in general might appear to be folklore, this still leaves the case of mixed states open. Actually, a minute of thought reveals that LU-equivalence and LOCC-interconvertibility are not the same: all separable (i.e.\ non-entangled) states are interconvertible by LOCC but they are not LU-equivalent. Nevertheless, one might be tempted to conclude that the equivalence remains if one is restricted to consider entangled states. However, a few more minutes of thought show the contrary: a simple example of two entangled states that are LOCC-interconvertible but not LU-equivalent is given below in Sec.\ III after I introduce the necessary mathematical preliminaries in Sec.\ II. Thus, one of the goals of this paper is to point out that in general LOCC-interconvertibility does not imply LU-equivalence and in order to identify states with the same entanglement one needs to go beyond the latter paradigm. This raises the question of how to characterize when two states are LOCC-interconvertible and, in particular, when, as it happens with pure states, it is sufficient to consider the well-studied case of LU-equivalence. The main result of this work, which is presented in Sec.\ IV, is such a sufficient condition. Therein I prove that two $n$-qudit states which are fully entangled in all $d$ levels (as quantified in precise terms by the Schmidt number vector \cite{huberdV13}) are LOCC-interconvertible iff they are LU-equivalent. Section V closes the paper with some concluding remarks and potential future research directions.     

\section{Notation and preliminaries}

This paper considers multipartite quantum systems with $n\geq2$ number of parties, which will be indexed by the elements of $[n]:=\{1,2,\ldots,n\}$, and each party has local dimension denoted by $d\geq2$. Thus, the total Hilbert space associated to the system is $H=\bigotimes_{i=1}^nH_i$ and without loss of generality $H_i=\mathbb{C}^d$ $\forall i$. The set of all density matrices on $H$ will be denoted by $D_{n}(d)$. Given any unit-norm vector $|\psi\rangle\in H$, $\psi\in D_{n}(d)$ stands for its corresponding pure-state density matrix $\psi=|\psi\rangle\langle\psi|$. Multipartite states will sometimes be regarded as bipartite states in some bipartition $M|\overline{M}$ where $\emptyset\neq M\subsetneq[n]$ and $\overline{M}$ is the complement of $M$ in $[n]$. Hence, $D_2(d_1,d_2)$ will denote the set of all density matrices for bipartite quantum systems with local dimension $d_1$ for one party and local dimension $d_2$ for another. 

LOCC protocols are a class of completely positive and trace-preserving (CPTP) maps that correspond to a sequence of unilocal actions by some party that communicates classically the obtained outcome, upon which the other parties implement local unitaries. Thus, one round of an LOCC protocol on $D_n(d)$ where, say, party 1 acts unilocally consists of a choice of a positive operator-valued measure (POVM) $\{K_j\}\subset\mathbb{C}^{d\times d}$ ($\sum_jK_j^\dag K_j=\one$) and of unitary $d\times d$ matrices $\{U_j^{(i)}\}$ ($i=2,3,\ldots,n$) that transforms an input state $\rho\in D_n(d)$ into the state
\begin{equation}
\rho_j=\frac{K_j\otimes U_j^{(2)}\otimes\cdots\otimes U_j^{(n)}\rho K_j^\dag\otimes (U_j^{(2)})^\dag\otimes\cdots\otimes (U_j^{(n)})^\dag}{\tr(K_j\otimes U_j^{(2)}\otimes\cdots\otimes U_j^{(n)}\rho K_j^\dag\otimes (U_j^{(2)})^\dag\otimes\cdots\otimes (U_j^{(n)})^\dag)}
\end{equation}
with probability
\begin{equation}
p_j=\tr(K_j\otimes U_j^{(2)}\otimes\cdots\otimes U_j^{(n)}\rho K_j^\dag\otimes (U_j^{(2)})^\dag\otimes\cdots\otimes (U_j^{(n)})^\dag)=\tr(\rho_1K_j^\dag K_j),
\end{equation}
where $\rho_1$ is the reduced density matrix of $\rho$ corresponding to party 1. Each possible outcome $j$ breaks the LOCC protocol into a branch where the parties can perform further sequences of unilocal actions with classical communication and local-unitary correction starting from $\rho_j$. If an LOCC protocol eventually stops in all branches (i.e.\ finite-round protocols), we say that the procedure transforms the input state $\rho$ into the ensemble corresponding to each output state at every terminating branch with the corresponding overall probability. If each terminating branch produces the same state $\sigma$, we say that $\rho$ can be transformed (deterministically) into $\sigma$ by LOCC, which is denoted by $\rho\to\sigma$. Note that one can also consider infinite-round protocols that never terminate but where the transformation converges to an output ensemble. Further details in this regard can be found in \cite{locc}. 

The simplest example of an LOCC protocol is an LU-transformation, i.e.\ given $\rho\in D_n(d)$, the transformation $\rho\to\sigma=U\rho U^\dag$, where $U=\bigotimes_{i=1}^nU^{(i)}$ and the matrices $\{U^{(i)}\}\subset\mathbb{C}^{d\times d}$ are unitary for all $i$. In this case, there obviously also exists another LU-transformation such that $\sigma\to\rho$ and, therefore, all LU-equivalent states are LOCC-interconvertible. In this paper I will also often consider a particular class of LOCC maps that I term mixed-LU given their analogy to the so-called mixed-unitary channels \cite{watrous}.

\begin{definition}
A CPTP map $\Lambda:D_{n}(d)\to D_{n}(d)$ is called mixed-LU if it can be written in the form
\begin{equation}\label{mixedLU}
\Lambda(\cdot)=\sum_ip_iU_i\cdot U^\dag_i,
\end{equation}
where $U_i=\bigotimes_{j=1}^nV_i^{(j)}$, the matrices $V_i^{(j)}\in\mathbb{C}^{d\times d}$ are unitary for all $i$ and $j$, $p_i>0$ $\forall i$ and $\sum_ip_i=1$.
\end{definition}
Note that every mixed-LU channel is manifestly implementable by LOCC. Furthermore, if every unilocal action in an LOCC protocol corresponds to a probabilistic unitary transformation, i.e.\ $K_j=\sqrt{q_j}U_j$ in the notation of the discussion around Eqs.\ (1)-(2) where the $\{q_j\}$ gives rise to a probability distribution and $U_j$ is a unitary matrix for all $j$, then the corresponding CPTP map is mixed-LU. Thus, if an LOCC map is not mixed-LU there has to be at least one party at some point in the protocol whose unilocal action is not a probabilistic unitary transformation.

It is well-known (see e.g.\ \cite{nielsenchuang}) that every pure state $\psi\in D_{2}(d_1,d_2)$ can be put by LU transformations in its Schmidt form
\begin{equation}
|\psi\rangle=\sum_{i=1}^{\min(d_1,d_2)}\sqrt{\lambda_i^\psi}|ii\rangle,
\end{equation}
where the so-called Schmidt coefficients, which are non-negative and arranged in non-increasing order, give rise to the vector $\vec{\lambda}^\psi$. The celebrated Nielsen's theorem characterizes LOCC convertibility for bipartite pure states \cite{nielsen}. Namely, given any two pure states $\psi,\phi\in D_{2}(d_1,d_2)$, $\psi\to\phi$ iff $\vec{\lambda}^\phi\succ\vec{\lambda}^\psi$, where $\succ$ denotes majorization \cite{MO}, i.e.\ 
\begin{equation}
\sum_{i=1}^k\lambda_i^\phi\geq\sum_{i=1}^k\lambda_i^\psi\quad\forall k.
\end{equation}
A function $f$ over probability distributions for which it holds that $f(\vec{q})\leq f(\vec{p})$ if $\vec{q}\succ\vec{p}$ is called Schur-concave. Consequently, all Schur-concave functions define good entanglement measures on bipartite pure states. The most paradigmatic case is the entanglement entropy, given by
\begin{equation}\label{entropy}
E(\psi)=H(\vec{\lambda}^\psi),
\end{equation}
where $H$ denotes the Shannon entropy, which is actually known to be strictly Schur-concave and strictly concave (see e.g.\ \cite{MO}). This is generalized to mixed states $\rho\in D_2(d_1,d_2)$ by the entanglement of formation \cite{EoF}
\begin{equation}
E(\rho)=\inf_{\{p_i,\psi_i\}}\sum_i p_i H(\vec{\lambda}^{\psi_i}),
\end{equation}
where the optimization runs over all ensembles of pure states realizing $\rho$, i.e.\ over all $\{p_i,\psi_i\}$ such that $\rho=\sum_ip_i|\psi_i\rangle\langle\psi_i|$ with $p_i>0$ $\forall i$ and $\sum_ip_i=1$. The entanglement of formation is a bona fide entanglement measure on $D_{2}(d_1,d_2)$, i.e.\ if $\rho\to\sigma$, then $E(\rho)\geq E(\sigma)$. When $\rho\in D_n(d)$ I denote by $E_i(\rho)$ the entanglement of formation in the bipartition $i|\overline{i}$ for any party $i\in[n]$. Note that for every $i\in[n]$, $E_i(\cdot)$ is non-increasing by $n$-partite LOCC manipulation.

In addition to the entanglement entropy and the entanglement of formation, in this paper the following entanglement quantifiers will also be considered. Given any pure state $\psi\in D_{2}(d_1,d_2)$, its Schmidt rank $SR(\psi)$ stands for the number of non-zero entries in $\vec{\lambda}^\psi$. This apprehends how many quantum levels are effectively entangled. This notion is extended to mixed bipartite states by considering the Schmidt number \cite{schmidtnumber}, which is defined for $\rho\in D_2(d_1,d_2)$ as
\begin{equation}
SN(\rho)=\min_{\{p_i,\psi_i\}}\max SR(\psi_i),
\end{equation}
where the minimization runs again all ensembles of pure states realizing $\rho$. In the case of multipartite states this information is captured by the Schmidt number vector \cite{huberdV13}. Given now any pure state $\psi\in D_{n}(d)$, similarly as above, for every $i\in[n]$ we can define $SR_i(\psi)$ to be the Schmidt rank of the vector of Schmidt coefficients $\vec{\lambda}^\psi_i$ corresponding to the bipartition $i|\overline{i}$. Arranging these numbers in non-increasing order one constructs the Schmidt rank vector $SRV(\psi)\in\mathbb{N}^n$ (note therefore that the $i$th entry of this vector, $SRV_i(\psi)$, is then fixed by the ordering constraint and it needs not be $SR_i(\psi)$). Thus, for arbitrary $\rho\in D_n(d)$ the single-party Schmidt number vector is defined by the vector $SNV(\rho)\in\mathbb{N}^n$ with entries given by
\begin{equation}
SNV_j(\rho)=\min_{\{p_i,\psi_i\}}\max SRV_j(\psi_i).
\end{equation}
Note in particular that if $\rho\in D_n(d)$ has maximal single-party Schmidt number vector, i.e.\ $SNV(\rho)=(d,\ldots,d)$, then for every pure-state ensemble decomposition such that $\rho=\sum_ip_i\psi_i$ there always exists at least one choice for $i$ such that $SRV(\psi_i)=(d,\ldots,d)$.

Finally, in Lemma \ref{lemma1} below we use that for $\rho\in D_n(d)$, the von Neumann entropy $S(\rho)=-\tr(\rho\log\rho)$ is not only known to be concave but also strictly concave (see e.g.\ \cite{carlen}).

\section{LOCC-interconvertibility does not imply LU-equivalence}

As explained in the introduction, in this section I provide a simple example of two states in $D_2(4)$ that are LOCC-interconvertible but not LU-equivalent. Define
\begin{equation}
|\phi^+_{12}\rangle=\frac{1}{\sqrt{2}}(|11\rangle+|22\rangle),\quad|\phi^+_{34}\rangle=\frac{1}{\sqrt{2}}(|33\rangle+|44\rangle)
\end{equation}
and $\rho\in D_2(4)$ by
\begin{equation}
\rho=\frac{1}{2}\phi^+_{12}+\frac{1}{2}\phi^+_{34}.
\end{equation}
Now, it is easy to see that $\phi^+_{12}$ and $\rho$ are LOCC-interconvertible. However, one is pure and the other is not. Therefore, they are not LU-equivalent. To see that $\phi^+_{12}\to\rho$, it suffices to note that $\Lambda(\phi^+_{12})=\rho$ for the mixed-LU map
\begin{equation}
\Lambda(X)=\frac{1}{2}X+\frac{1}{2}U\otimes U X U^\dag\otimes U^\dag,
\end{equation}
where $U$ is the $4\times4$ unitary matrix such that permutes $|1\rangle$ with $|3\rangle$ and $|2\rangle$ with $|4\rangle$. To see that $\rho\to\phi^+_{12}$, consider a one-round protocol in which party 1 acts unilocally implementing the two-outcome POVM
\begin{equation}
K_1=|1\rangle\langle1|+|2\rangle\langle2|,\quad K_2=|3\rangle\langle3|+|4\rangle\langle4|.
\end{equation}
Then, on outcome 1 this branch of the protocol terminates. If outcome 2 is obtained, the parties implement the LU-transformation given by $U\otimes U$, where $U$ is the unitary matrix above.

\section{A sufficient condition under which LOCC-interconvertibility implies LU-equivalence}

In this section I present the main result of this paper. If two multipartite states have full entanglement dimensionality, then indeed LOCC-interconvertibility and LU-equivalence are equivalent notions. This is precisely stated in the following theorem.

\begin{theorem}\label{th}
Let $\rho\in D_{n}(d)$ and $\sigma\in D_{n}(d)$ satisfy that $SNV(\rho)=SNV(\sigma)=(d,d,\ldots,d)$. If $\rho$ and $\sigma$ are interconvertible by LOCC, then $\rho$ and $\sigma$ are LU-equivalent.
\end{theorem}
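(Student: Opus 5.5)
Since $\rho\to\sigma\to\rho$, composing the two LOCC protocols gives an LOCC channel $\Lambda$ with $\Lambda(\rho)=\rho$, and likewise one with $\sigma$ as a fixed point. The plan is to show that full Schmidt number vector forces every such protocol to be mixed-LU. A mixed-LU channel $\Lambda_1=\sum_ip_iU_i\cdot U_i^\dag$ with $\Lambda_1(\rho)=\sigma$ and another $\Lambda_2$ with $\Lambda_2(\sigma)=\rho$ then give $S(\rho)\le S(\sigma)\le S(\rho)$ by concavity of the von Neumann entropy. Strict concavity makes all $U_i\rho U_i^\dag$ equal, hence $\sigma=U_1\rho U_1^\dag$, which is LU-equivalence. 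This last step is the content I would isolate as Lemma~\ref{lemma1}: if $\Lambda$ is mixed-LU and $S(\Lambda(\rho))\le S(\rho)$, then $\Lambda(\rho)$ is LU-equivalent to $\rho$.

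The main step is therefore to show that in the protocols $\rho\to\sigma$ and $\sigma\to\rho$ every unilocal action is effectively a probabilistic unitary. For this I use the entanglement of formation in the bipartitions $i|\overline{i}$. Interconvertibility forces $E_i(\rho)=E_i(\sigma)$ for every $i$, and, since $E_i$ is monotone on average under LOCC (as it is for the entanglement of formation), it is preserved along every branch in expectation. Consider the first round where party $i$ applies a POVM $\{K_j\}$ with some $K_j$ not proportional to a unitary. Take an optimal ensemble $\{p_k,\psi_k\}$ for $E_i(\rho)$. By $SNV(\rho)=(d,\ldots,d)$, at least one $\psi_k$ has full Schmidt rank in every single-party cut, in particular in the cut $i|\overline{i}$.

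A local measurement on such a $\psi_k$ produces post-measurement states whose Schmidt vectors average to something strictly majorizing $\vec\lambda^{\psi_k}$ unless $K_j^\dag K_j\propto\one$ on the support. Here full rank matters: the reduced state is invertible, so a non-scalar $K_j^\dag K_j$ genuinely acts on it. The argument uses Nielsen-type majorization together with strict Schur-concavity and strict concavity of $H$. This should give
\[
\sum_j p_jE_i(\rho_j)<E_i(\rho),
\]
and the subsequent rounds cannot increase $E_i$ back. This contradicts $E_i(\sigma)=E_i(\rho)$.

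I expect the hardest part to be making this strict decrease rigorous for mixed states. The optimal decomposition of $\rho_j$ need not come from that of $\rho$, although the inequality direction is favorable: $E_i(\rho_j)$ is at most the average over the induced ensemble. I also have to handle infinite-round protocols and branches where the parties use measurements that are unitary only on a subspace. I would first try the argument with $\psi_k$ of full Schmidt rank and invoke strict concavity of the entropy of the reduced state $\tr_{\overline{i}}\psi_k$ under the map $X\mapsto\sum_j K_jXK_j^\dag$ acting through the polar decompositions $K_j=U_j|K_j|$. Once every branch is a probabilistic unitary, both conversions are mixed-LU and Lemma~\ref{lemma1} finishes the proof.
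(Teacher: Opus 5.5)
Your architecture is the paper's: Lemma~\ref{lemma1} via strict concavity of $S$, the single-cut entanglements of formation $E_i$, a strict drop of $E_i$ at any non-unitary local step witnessed by a full-Schmidt-rank member of a decomposition, and a contradiction with $E_i(\rho)=E_i(\sigma)$. The step you flagged as hardest is fine as you wrote it. At the first non-unitary node the state is an LU image of the input, so it still has full $SNV$, and its full-SRV component has an invertible reduced state; this also removes your ``unitary on a subspace'' worry. Bounding each $E_i(\rho_j)$ by the induced ensemble is legitimate, and it matches the paper's use of Lemma~\ref{lemma2} on each $\psi_i$ followed by convexity of $E_j$. Infinite rounds are handled, as in the paper's footnote, by compactness of the set of mixed-LU channels.

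The genuine gap is the pure-state claim you treat as routine. The claim is: if $|\psi\rangle$ has invertible reduced state $\rho_i$ and some $K_j^\dagger K_j\not\propto\one$, the average entanglement entropy strictly drops. Your stated reason, that the averaged Schmidt vector then strictly majorizes $\vec\lambda^{\psi}$, is false. Take $\rho_i=\mathrm{diag}(0.6,0.4)$, $K_1^\dagger K_1=\mathrm{diag}(7/12,3/8)$ and $K_2^\dagger K_2=\mathrm{diag}(5/12,5/8)$. Both outcomes have probability $1/2$, with Schmidt vectors $(0.7,0.3)$ and $(0.5,0.5)$, whose average is exactly $(0.6,0.4)$.

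Strict Schur-concavity and strict concavity of $H$ only reduce you to the case where every outcome is isospectral to $\rho_i$. Invertibility alone does not exclude that case, because a single non-unitary Kraus operator can map $\rho_i$ to an isospectral state. You must use $\sum_jK_j^\dagger K_j=\one$ across all outcomes. This is the technical heart of the paper's Lemma~\ref{lemma2}, Eqs.~(\ref{translemma2})--(\ref{final}), done there with $|\psi\rangle=A\otimes\one|\phi_+\rangle$ and the pseudoinverse of $A$.

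Your proposed repair does not supply this. Write $q_j=\tr(K_j^\dagger K_j\rho_i)$. Concavity applied to $\{q_j,K_j\rho_iK_j^\dagger/q_j\}$, or to the rotated states $|K_j|\rho_i|K_j|/q_j$ obtained from $K_j=U_j|K_j|$, only bounds the average by $S(\sum_jK_j\rho_iK_j^\dagger)$ or $S(\sum_j|K_j|\rho_i|K_j|)$. The first can exceed $S(\rho_i)$, and the second is always at least $S(\rho_i)$ because that channel is unital.

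The fix is to polar-decompose $K_j\rho_i^{1/2}$ rather than $K_j$. Then $K_j\rho_iK_j^\dagger/q_j$ is isospectral to $\tau_j=\rho_i^{1/2}K_j^\dagger K_j\rho_i^{1/2}/q_j$, and $\sum_jq_j\tau_j=\rho_i$. Strict concavity of $S$ therefore gives a strict drop unless $\tau_j=\rho_i$ for all $j$, and by invertibility of $\rho_i$ that means $K_j^\dagger K_j=q_j\one$. With this lemma inserted your argument is complete, and it is shorter than the paper's pseudoinverse and majorization route.
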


The proof of the theorem is presented in the following, for which we derive first some lemmas.

\begin{lemma}\label{lemma1}
Let $\rho\in D_{n}(d)$ and $\sigma\in D_{n}(d)$. If there exist mixed-LU maps $\Lambda$ and $\Lambda'$ such that $\Lambda(\sigma)=\rho$ and $\Lambda'(\rho)=\sigma$, then $\rho$ and $\sigma$ are LU-equivalent.
\end{lemma}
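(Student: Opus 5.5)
The plan is to use the von Neumann entropy, which the paper points out just above is strictly concave. Every mixed-LU map is a convex combination of unitary conjugations, so for any state $\tau$
\begin{equation}
S(\Lambda(\tau))=S\Big(\sum_ip_iU_i\tau U_i^\dag\Big)\geq\sum_ip_iS(U_i\tau U_i^\dag)=S(\tau),
\end{equation}
since $S$ is concave and invariant under unitary conjugation. Applying this to $\Lambda(\sigma)=\rho$ and $\Lambda'(\rho)=\sigma$ gives $S(\rho)\geq S(\sigma)$ and $S(\sigma)\geq S(\rho)$. Hence $S(\rho)=S(\sigma)$, and both inequalities hold with equality.

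Next I use strict concavity on the equality $S(\sum_ip_iU_i\sigma U_i^\dag)=\sum_ip_iS(U_i\sigma U_i^\dag)$. All $p_i>0$, so the states $U_i\sigma U_i^\dag$ must all coincide: if two of them differed, strict concavity would make the inequality strict. Call the common state $U_1\sigma U_1^\dag$. Then
\begin{equation}
\rho=\Lambda(\sigma)=\sum_ip_iU_1\sigma U_1^\dag=U_1\sigma U_1^\dag.
\end{equation}
Since $U_1=\bigotimes_jV_1^{(j)}$ is a product of local unitaries, $\rho$ and $\sigma$ are LU-equivalent. This step needs only one of the two maps; the second map was used only to force entropy equality.

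The one point to check is that strict concavity holds in the form needed. That form is: if $S(\sum_ip_i\tau_i)=\sum_ip_iS(\tau_i)$ with all $p_i>0$, then all $\tau_i$ are equal. This is the standard meaning of strict concavity of $S$ on density matrices, as in the reference the paper cites. If needed, it also follows from the Holevo quantity, which vanishes only when all $\tau_i$ equal their average, since relative entropy is zero only for identical states. The mixed-LU map may have finitely or countably many terms. In the countable case I would use this relative-entropy identity instead of a finite strict-concavity statement. Apart from that, the argument is direct.
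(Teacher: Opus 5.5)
Your proof is correct and is essentially the paper's argument: applying concavity of $S$ to the two mixed-LU maps forces $S(\rho)=S(\sigma)$, and strict concavity then forces every $U_i\sigma U_i^\dag$ to coincide with $\rho$. Your remark using the Holevo quantity and relative entropy to cover a countable number of terms is a valid extra detail that the paper does not spell out.
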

\begin{proof}
Due to the concavity of the Von Neumann entropy, the assumption implies that $S(\sigma)\leq S(\rho)\leq S(\sigma)$. Thus, the inequalities here must hold with equality. But, if $\Lambda$ is a mixed-LU channel ($\Lambda(\cdot)=\sum_ip_iU_i\cdot U^\dag_i$) and $\Lambda(\sigma)=\rho$, due to the strict concavity of Von Neumann entropy, then it must hold that $S(\rho)>S(\sigma)$ unless $U_i\sigma U^\dag_i=\rho$ $\forall i$. This proves the claim. 
\end{proof}

\begin{lemma}\label{lemma2}
Let $\psi\in D_{n}(d)$ be a pure state with $SNV(\psi)=(d,d,\ldots,d)$ and $\rho\in D_{n}(d)$. If $\Lambda:D_{n}(d)\to D_{n}(d)$ is an LOCC transformation that is not mixed-LU and $\Lambda(\psi)=\rho$, then it must hold that $E_j(\psi)>E_j(\rho)$ for some $j\in[n]$ (which must happen for every $j$ corresponding to a party that does not implement a probabilistic unitary transformation at some step of the protocol)
\end{lemma}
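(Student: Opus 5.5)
Since $\Lambda$ is not mixed-LU, some party $j$ performs a non-unitary unilocal step somewhere in the protocol. I will follow the protocol tree down to the first such step on some branch. Up to that node every action is a probabilistic unitary, so the branch state there is $U\psi U^\dag$ with $U$ a local unitary. Because $\psi$ is pure, this state is pure, and its Schmidt coefficients in every bipartition $i|\overline{i}$ coincide with those of $\psi$.

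Let $\phi$ denote this state. Party $j$ applies Kraus operators $\{K_k\}$, one of which is not proportional to a unitary; without loss of generality each $K_k$ is nonzero. Outcome $k$ produces the pure state $\phi_k\propto (K_k\otimes\one)|\phi\rangle$ with probability $p_k=\tr(\phi_jK_k^\dag K_k)$, where $\phi_j$ is the reduced state of party $j$.

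\textbf{Key step.} I claim that $\sum_k p_k E_j(\phi_k)<E(\phi)$ in the cut $j|\overline{j}$. The reduced state of $\phi_k$ on $\overline{j}$ is $\tr_j[(K_k^\dag K_k\otimes\one)\phi]/p_k$, and these states average to $\phi_{\overline{j}}$. Concavity of the entropy therefore gives $\sum_kp_kS(\phi_{k,\overline{j}})\le S(\phi_{\overline{j}})=E_j(\phi)$. Strict concavity gives strict inequality unless all $\phi_{k,\overline{j}}$ equal $\phi_{\overline{j}}$.

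I must exclude that equality case, and this is where the assumption $SNV(\psi)=(d,\ldots,d)$ enters: it gives $SR_j(\phi)=d$. Write $|\phi\rangle=\sum_i\sqrt{\lambda_i}|i\rangle|e_i\rangle$ with all $\lambda_i>0$. The operator $\tr_j[(K_k^\dag K_k\otimes\one)\phi]$ equals $\sqrt{D}(K_k^\dag K_k)^T\sqrt{D}$ in the $\{e_i\}$ basis, with $D=\mathrm{diag}(\lambda)$ invertible. Equality to $p_k D$ then forces $K_k^\dag K_k=p_k\one$, i.e.\ $K_k$ is proportional to a unitary, a contradiction. So the strict inequality holds.

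Continuing the protocol from each branch is LOCC. The entanglement of formation is LOCC-monotone on average, so the expected $E_j$ never increases, and it is convex. Combining these gives
\[
E_j(\rho)\le\sum_{\text{leaves}}q_\ell E_j(\rho_\ell)\le\ldots<E_j(\phi)=E_j(\psi).
\]
Here $\rho=\sum_\ell q_\ell \rho_\ell$ is the average over all branches. In the deterministic case every leaf equals $\rho$; in the ensemble or infinite-round case I use convexity and a limit, with strictness preserved because the gap is fixed at the first node.

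Branches on which no non-unitary step ever occurs end in states LU-equivalent to $\psi$ and contribute exactly $E_j(\psi)$. The branch through the non-unitary node has positive probability, so the average stays strictly below $E_j(\psi)$.

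The main obstacle is the monotonicity-on-average step. Entanglement of formation is known to be non-increasing on average under LOCC, so I would invoke that result directly rather than only its deterministic version. The parenthetical clause of the lemma follows by applying the same argument to any party $j$ that performs a non-unitary step at some node.
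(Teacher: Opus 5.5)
\textbf{Comparison with the paper's proof.} Your argument is correct, and its key step takes a different and shorter route than the paper's.

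\emph{The paper's route.} At a step where party 1 measures $\psi$, the paper combines the Jonathan--Plenio ensemble-majorization criterion with strict Schur-concavity and strict concavity of the Shannon entropy. From this it concludes only that every outcome has the same Schmidt vector as $\psi$, i.e.\ $|\phi_k\rangle=U_k\otimes V_k|\psi\rangle$. This leaves an unknown unitary $V_k$ on the complementary side. To reach $K_k^\dag K_k=q_k\one$, the paper then needs a separate matrix argument: it writes $|\psi\rangle=A\otimes\one|\phi_+\rangle$, uses the pseudo-inverse of the full-rank $A$, strict majorization under mixed-unitary channels, and a Courant--Fischer compression bound.

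\emph{Your route.} You apply strict concavity of the von Neumann entropy to the marginals on the unmeasured side, which average exactly to $\phi_{\overline{j}}$ by no-signalling. The equality case then fixes the operators, $\phi_{k,\overline{j}}=\phi_{\overline{j}}$, not merely their spectra. The identity $\tr_j[(M\otimes\one)\phi]=\sqrt{D}M^T\sqrt{D}$ with $D$ invertible then gives $K_k^\dag K_k=p_k\one$ at once.

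The full-Schmidt-rank hypothesis enters at the same place in both proofs: invertibility of $D$ versus full rank of $A$. The paper's route keeps the argument inside the Nielsen/Jonathan--Plenio majorization framework. Yours is self-contained and avoids both that theorem and the pseudo-inverse computation.

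The ``monotonicity on average'' you flag as the main obstacle is not actually needed. As in Eq.~(\ref{reduction}), write the output of a step as the average of the outputs of the sub-protocols below its outcomes. Then use ordinary monotonicity of $E_j$ under each sub-protocol, together with convexity. Recursing this from the root down to your node $\nu$ also bounds the branches that pass through \emph{other} non-unitary nodes. Your accounting of the leaves does not mention those branches explicitly.

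\textbf{The parenthetical clause needs tightening.} Your key step requires $SR_j=d$ at the node where $j$ acts non-unitarily, and you get this from LU-equivalence with $\psi$. That fails if another party has already acted non-unitarily on that branch, so ``the same argument'' does not apply verbatim. The missing observation is the following.
\begin{itemize}
\item When a party $i\neq j$ acts, the $j$-marginals of the outcomes average to the previous $j$-marginal. By strict concavity, either they all coincide with it, so the full-rank spectrum of party $j$'s marginal is preserved, or $E_j$ already drops strictly at that step.
\item Probabilistic unitaries by $j$ also preserve that spectrum.
\end{itemize}
Hence, on every branch reached with positive probability, one of two things happens. Either $E_j$ drops strictly before $j$'s first non-unitary step, or $j$'s marginal is still full rank at that step and your key step applies. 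In both cases the strict gap propagates, by the same monotonicity-plus-convexity recursion, to $E_j(\rho)<E_j(\psi)$. The paper's own iteration is equally brief on this point, since it re-applies the argument only at nodes reached through unitary steps.
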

\begin{proof}
Since $\psi$ is pure, then $\psi\to\rho$ by LOCC iff $\psi\to\{p_i,\psi_i\}$ by LOCC, where $\{p_i,\psi_i\}$ is some ensemble of pure states realizing $\rho$. At the same time, the existence of an LOCC transformation from $\psi$ to $\rho$ implies that there exists a bipartite LOCC transformation that implements the same conversion when we view the states as bipartite states in the bipartition $j|\overline{j}$ for any $j\in[n]$. For any such fixed choice of $j$ I refer to such protocols as bipartite $j$-LOCC. Reference \cite{JPensemble} (see also \cite{nielsenvidal}) has shown that $\psi\to\{p_i,\psi_i\}$ by bipartite $j$-LOCC iff $\sum_ip_i\vec\lambda_j^{\psi_i}$ majorizes $\vec\lambda_j^\psi$, where the subindex $j$ indicates that Schmidt coefficients correspond to the bipartition $j|\overline{j}$. Thus, $\psi\to\rho$ by bipartite $j$-LOCC implies that for some ensemble of pure states $\{p_i,\psi_i\}$ realizing $\rho$ it holds that
\begin{align}
E_j(\psi)=H(\vec\lambda_j^\psi)&\geq H(\sum_ip_i\vec\lambda_j^{\psi_i})\nonumber\\
&\geq \sum_ip_iH(\vec\lambda_j^{\psi_i})\nonumber\\
&\geq E_j(\rho),\label{eq1lemma2}
\end{align}
where in the first inequality we have used Schur-concavity of $H$ and in the second inequality concavity. But, since we actually have strict concavity, the inequality in the second line of (\ref{eq1lemma2}) is strict if $\vec\lambda_j^{\psi_i}\neq \vec\lambda_j^{\psi_k}$ for some $i\neq k$. On the other hand, strict Schur-concavity imposes that the first inequality in (\ref{eq1lemma2}) is strict if $\vec\lambda_j^\psi\neq\sum_ip_i\vec\lambda_j^{\psi_i}$. Therefore, unless $\vec\lambda_j^{\psi_i}=\vec\lambda_j^\psi$ $\forall i$, we obtain that $E_{j}(\psi)>E_{j}(\rho)$. Therefore, to finish the proof we have to show that for any $n$-partite LOCC protocol such that $\psi\to\{p_i,\psi_i\}$ for which the condition $\vec\lambda_j^{\psi_i}=\vec\lambda_j^\psi$ holds $\forall j,i$ (which, as we have just seen, is necessary to fulfill that $E_j(\psi)=E_j(\rho)$ $\forall j$) must be mixed-LU (note that the fact that a mixed-LU map satisfies the condition is immediate, but we have to prove it is the only LOCC map satisfying it).

Now, as pointed out in Sec.\ II, $n$-partite LOCC protocols consist of a sequence of unilocal actions with local-unitary correction in which each possible outcome splits the protocol in different branches that dictate the subsequent unilocal operations. Without loss of generality suppose that party 1 acts first unilocally transforming $|\psi\rangle$ into $|\phi_k\rangle$ with probability $q_k$ if outcome $k$ is obtained. If $\vec\lambda_1^{\phi_k}\neq\vec\lambda_1^\psi$ for some $k$, repeating the argument of Eq.\ (\ref{eq1lemma2}) we then obtain that $E_1(\psi)>\sum_kq_kE_{1}(\phi_k)$. By completing the protocol in each branch, each state $\phi_k$ is finally converted by further LOCC into a state $\sigma_k$ such that $\rho=\sum_kq_k\sigma_k$. But, then
\begin{equation}\label{reduction}
E_1(\psi)>\sum_kq_kE_{1}(\phi_k)\geq\sum_kq_kE_1(\sigma_k)\geq E_1(\sum_kq_k\sigma_k)=E_1(\rho).
\end{equation}
Therefore, in order to have $E_1(\psi)=E_1(\rho)$ at the end of the protocol, we must necessarily require at this first step that $\vec\lambda_1^{\phi_k}=\vec\lambda_1^\psi$ $\forall k$, i.e.\ $|\phi_k\rangle=U_k\otimes V_k|\psi\rangle$ with the $d\times d$ matrices $\{U_k\}$ and the $d^{n-1}\times d^{n-1}$ matrices $\{V_k\}$ all unitary. 

Thus, consider an arbitrary unilocal action by the first party with POVM $\{K_k\}$
and impose the property established above that
\begin{equation}\label{translemma2}
K_k\otimes\one|\psi\rangle=\sqrt{q_k}U_k\otimes V_k|\psi\rangle\quad\forall k.
\end{equation}
I will now show using the assumption that $SNV(\psi)=(d,d,\ldots,d)$ that this implies that $K_k\in\mathbb{C}^{d\times d}$ must be proportional to a unitary matrix for each $k$. For this, I use that a our state can be written as 
\begin{equation}
|\psi\rangle=A\otimes\one|\phi_+\rangle, 
\end{equation}
where $|\phi_+\rangle$ is the bipartite maximally entangled state with local dimension $d^{n-1}$ and $A\in\mathbb{C}^{d\times d^{n-1}}$. Crucially, the premise on the Schmidt number vector of $\psi$ implies that $A$ is full-rank. Therefore, its pseudoinverse $A^{-1}\in\mathbb{C}^{d^{n-1}\times d}$ satisfies that $AA^{-1}=\one$, while $A^{-1}A$ is a projection on a $d$-dimensional subspace (the support of $A^\dagger A$). Note that this also implies that $q_k\neq0$ for all $k$. In addition to the above, I will also use that $S_1\otimes S_2|\phi_+\rangle=|\phi_+\rangle$ iff $S_2=S_1^{-T}$ (see e.g.\ \cite{gourwallach}). With all these observations, Eq.\ (\ref{translemma2}) is equivalent to
\begin{equation}
U_k^\dag K_kAV_k^\ast\otimes\one|\phi_+\rangle=\sqrt{q_k}A\otimes\one|\phi_+\rangle,
\end{equation}
and, then, to
\begin{equation}
U_k^\dag K_kAV_k^\ast=\sqrt{q_k}A\Rightarrow K_k=\sqrt{q_k}U_kAV_k^TA^{-1},
\end{equation}
which leads to
\begin{equation}\label{kdagak}
K_k^\dag K_k=q_kA^{-\dagger}V_k^\ast A^\dag A V_k^TA^{-1}.
\end{equation}
Hence, $\sum_kK_k^\dag K_k=\one$ implies that
\begin{equation}\label{final}
(A^{-1}A)(\sum_kq_kV_k^\ast A^\dag A V_k^T)(A^{-1}A)=A^\dag A.
\end{equation}
But now, $\sum_kq_kV_k^\ast A^\dag A V_k^T$ can be thought of as the output of a mixed-unitary channel on the positive semidefinite matrix $A^\dag A$. Therefore, the ordered eigenvalues of the former are strictly majorized by those of the latter unless $V_k^\ast A^\dag A V_k^T=V_{k'}^\ast A^\dag A V_{k'}^T$ $\forall k\neq k'$ \cite{watrous}. Additionally, left and right-multiplying a positive semidefinite matrix with a projection cannot increase any of the ordered eigenvalues (which can be easily seen as a consequence of the Courant-Fischer min-max theorem; see e.g.\ \cite{bhatia}). Hence, the only way in which Eq.\ (\ref{final}) can hold is if $V_k^\ast A^\dag A V_k^T=V_{k'}^\ast A^\dag A V_{k'}^T$ $\forall k\neq k'$, which must have additionally the same support as $A^\dag A$. Therefore, $V_k^\ast A^\dag A V_k^T=A^\dag A$ $\forall k$ and, consequently, by Eq.\ (\ref{kdagak}) it follows that $K^\dag_k K_k=q_k\one$, as I wanted to show.

Summing up, I have proven that unless the first unilocal action corresponds to party 1 implementing a local unitary $U_k$ with probability $q_k$, Eq.\ (\ref{reduction}) applies and it will hold at the end of the protocol that $E_1(\psi)>E_1(\rho)$. Thus, in order for equality to hold, this entails that the subsequent states at every branch that arise from this first step of the protocol must all be LU-equivalent to $\psi$. This means that we can now reiterate the argument for each of these states concluding that the next unilocal action in each branch independently of which party $j$ implements it must be of the same form so that $E_j$ does not decrease at the end of the protocol. Using this argument iteratively we conclude that most general LOCC protocol that transforms $\psi$ to $\rho$ fulfilling that $E_j(\psi)=E_j(\rho)$ $\forall j$ must correspond to the parties implementing different LU transformations with a certain probability and, hence, a mixed-LU channel\footnote{Note that this applies as well even if the LOCC protocol has infinitely many rounds as the unitary group is compact.}. This concludes the proof.
\end{proof}

\begin{proof}[Proof of Theorem~\ref{th}]
If $\rho$ and $\sigma$ are interconvertible by mixed-LU maps, then we are done by Lemma \ref{lemma1}. So let us assume that one of the LOCC maps cannot be mixed-LU. Without loss of generality, say that any LOCC map $\Lambda$ such that $\Lambda(\sigma)=\rho$ is not mixed-LU. Denote by $\{p_i,\psi_i\}$ an arbitrary pure-state ensemble realizing $\sigma$. Then, by linearity, it must hold that $\sum_ip_i\Lambda(\psi_i)=\rho$, which implies that
\begin{equation}
\Lambda(\psi_i)=\rho_i,\quad \sum_ip_i\rho_i=\rho.
\end{equation}
Since $\Lambda$ is not mixed-LU and our assumption that $SNV(\sigma)=(d,d,\ldots,d)$ implies in this case that for every ensemble decomposition of $\sigma$ it holds that $SNV(\psi_i)=(d,d,\ldots,d)$ for some $i$, we can apply Lemma \ref{lemma2} to conclude that there exists a fixed $j$ independently of the ensemble decomposition (corresponding to a party that does not implement a probabilistic unitary transformation at some step of the protocol $\Lambda$) such that $E_j(\psi_i)> E_j(\rho_i)$ for any such $i$. Therefore, for every pure-state ensemble $\{p_i,\psi_i\}$ that realizes $\rho$ we have a fixed $j\in[n]$ such that
\begin{equation}
\sum_ip_iE_{j}(\psi_i)>\sum_ip_iE_{j}(\rho_i)\geq E_{j}(\sum_ip_i\rho_i)=E_{j}(\rho),
\end{equation}
where I have used the convexity of $E_{j}$. Since this holds for every ensemble decomposition of $\sigma$, we arrive then at the contradiction $E_{j}(\sigma)>E_{j}(\rho)$, as this implies that $\rho$ cannot be converted by LOCC into $\sigma$.
\end{proof}

\section{Conclusion and further work}

LU equivalence has been actively investigated in the context of entanglement theory given that it implies LOCC interconvertibility and it therefore identifies states in the same entanglement equivalence class. Although one might have been tempted to think that these two notions are interchangeable, I have pointed out here with a simple example that this is not the case. There exist pairs of states which are not LU-equivalent even though they are LOCC-interconvertible. Nevertheless, this and other examples that might come to mind are based on a very particular structure such as dimension-deficient states. Intuition suggests that excluding highly particular states the above identification could be restored. This is precisely the main contribution of this work. It provides a complete and rigorous proof that for states with full entanglement dimensionality LOCC interconvertibility holds iff they are LU-equivalent. 

While this work proves that the aforementioned condition of full entanglement dimensionality is sufficient for this two notions to be the same, it seems quite likely that it is not necessary. For the future it would be interesting to investigate which precise conditions characterize the particular states for which LOCC interconvertibility does not boil down to LU equivalence. Also, in this paper I considered multipartite states with the same local dimension, $D_n(d)$, while one could study quantum systems with different local dimensions, i.e.\ $D_n(d_1,\ldots,d_n)$. Note that the proof technique put forward here carries on straightforwardly to this case as long as states for which every pure-state ensemble decomposition contains a pure state with Schmidt number vector equal to $(d_1,\ldots,d_n)$ are considered. This in particular covers the case of pure states already proven in \cite{Gingrich}. Note, however, that maximal Schmidt number vector is no longer a sufficient condition for the identification of LOCC interconvertibility and LU equivalence to hold in the case of uneven local dimensions. For instance, in $D_2(4,2)$ the maximal possible Schmidt number is two, which is the case of the states $\phi^+_{12},\rho\in D_2(4,2)$, where
\begin{equation}
|\phi^+_{12}\rangle=\frac{1}{\sqrt{2}}(|11\rangle+|22\rangle),\quad|\phi^+_{34}\rangle=\frac{1}{\sqrt{2}}(|31\rangle+|42\rangle),\quad
\rho=\frac{1}{2}\phi^+_{12}+\frac{1}{2}\phi^+_{34}.
\end{equation}   
A similar argument as the one in Sec.\ III readily shows that $\phi^+_{12}$ and $\rho$ are LOCC-interconvertible but not LU-equivalent.

\begin{acknowledgments}
I wish to thank Barbara Kraus for long-lasting discussions over the years on this and related topics. I acknowledge financial support from the Spanish Ministerio de Ciencia, Innovaci\'on y Universidades (grant PID2023-146758NB-I00, grant PID2024-160539NB-I00, and ``Severo Ochoa Programme for Centres of Excellence" grant CEX2023-001347-S all funded by MCIN/AEI/10.13039/501100011033) and from Comunidad de Madrid (grant QUITEMAD-CM TEC-2024/COM-84).
\end{acknowledgments}

\end{document}